\title{Permutations Unlabeled beyond Sampling Unknown}
\author{Ivan Dokmani\'c,~\IEEEmembership{Member,~IEEE}\thanks{Ivan Dokmani\'c is with the Coordinated Science Laboratory and the Department of Electrical Engineering at the University of Illinois at Urbana-Champaign.}\thanks{ID was supported by the National Science Foundation Award number 1817577, ``Combinatorial inverse problems in distance geometry'', and he wishes to thank Martin Vetterli for the title idea.}\thanks{This manuscript is a slightly revised version of a preprint available at \url{https://arxiv.org/abs/1812.00498}.}}
\newtheorem{thm}{Theorem}
\newtheorem{prop}[thm]{Proposition}
\newtheorem{lem}[thm]{Lemma}
\newtheorem{cor}[thm]{Corollary}
\providecommand{\cref}[1]{Chapter~\ref{chap:#1}}
\providecommand{\R}{\ensuremath{\mathbb{R}}}
\providecommand{\C}{\ensuremath{\mathbb{C}}}
\providecommand{\set}[1]{\left\{#1\right\}}
\providecommand{\diag}{\mathop{\mathrm{diag}}}
\providecommand{\e}{\ensuremath{\mathrm{e}}}
\providecommand{\I}{\ensuremath{\mathrm{i}}}
\providecommand{\wt}[1]{\ensuremath{\widetilde{#1}}}
\newcommand{\range}{\mathcal{R}}
\renewcommand{\ker}{\mathcal{N}}
\newcommand{\T}{\top}
\newcommand{\rref}{\mathsf{rref}}
\newcommand{\calC}{\mathcal{C}}
\newcommand{\Pit}{\mathit{\Pi}}
\begin{document}

\maketitle

\begin{abstract}
    A recent unlabeled sampling result by Unnikrishnan, Haghighatshoar and Vetterli  states that with probability one over iid Gaussian matrices $A$, any $x$ can be uniquely recovered from an unknown permutation of $y = A x$ as soon as $A$ has at least twice as many rows as columns. We show that this condition on $A$ implies something much stronger: that an unknown vector $x$ can be recovered from measurements $y = T A x$, when the unknown $T$ belongs to an arbitrary set of invertible, diagonalizable linear transformations $\mathcal{T}$. The set $\mathcal{T}$ can be finite or countably infinite. When it is the set of $m \times m$ permutation matrices, we have the classical unlabeled sampling problem.  We show that for almost all $A$ with at least twice as many rows as columns, all $x$ can be recovered either uniquely, or up to a scale depending on $\mathcal{T}$, and that the condition on the size of $A$ is necessary. Our proof is based on vector space geometry. Specializing to permutations we obtain a simplified proof of the uniqueness result of Unnikrishnan, Haghighatshoar and Vetterli. In this letter we are only concerned with uniqueness; stability and algorithms are left for future work.
\end{abstract}

\section{Introduction}

We ``steal'' a motivating example from \cite{Emiya:gi}: Imagine that you are recording a sound field with a large number of microphones connected to a recording interface. Alas, you forgot to label the cables so you end up with a pile of recordings without knowing which one corresponds to which spatial position. Is there a way to reconstruct the wavefield even without proper labels?

We can model this situation by the following unlabeled sampling problem:
\begin{equation}
    \label{eq:model}
    \tag{ULS}
    y = \Pit A x,
\end{equation}
where $A \in \C^{m \times n}$, $x \in \C^n$, and instead of measuring the usual $Ax$ we get to measure its unknown permutation. If the permutation $\Pit$ is known (the cables are neatly labeled), \eqref{eq:model} is simply a linear system. 

Many signal processing problems are modeled by \eqref{eq:model} and related constructions. If the columns of $A$ are samples of harmonic sinusoids, the problem is that of sampling at unknown locations \cite{Balakrishnan:1962,Marziliano:2000ki}. In simultaneous localization and mapping (SLAM), a robot is sensing an unknown environment, without ``knowing'' its own spatial location \cite{Krekovic:2016je}. If there are a finite number of possible locations and $A$ contains a model of the world as seen through the mobile sensors, then \eqref{eq:model} models a SLAM scenario.  A system similar to \eqref{eq:model} appears in room geometry reconstruction and microphone positionining by echoes \cite{Dokmanic:2013dz,Dokmanic:2014tc,Dokmanic:2016gu}. A nonlinear instance of sensing with unknown permutations is the unlabeled distance geometry problem where the task is to recover a point set from point-to-point distances, without knowing which distance corresponds to which pair of points \cite{Huang2018aa}.

In the context of Internet of things and fifth-generation communication systems, \eqref{eq:model} models header-free communication with very short packets \cite{Durisi:en,Song:ff}. Headers that identify individual nodes are too large compared with actual payloads, but in many sensing tasks the correct labeling can be inferred from the payload. When the nodes are sensors sensing a spatial field which has a subspace representation (for example, an advection-diffusion field \cite{Ranieri:2012wl,MartinezCamara2013hz} or a wavefield \cite{Ajdler:2006ex}), then the problem can be modeled as measuring $Ax$ up to a permutation. Recent work shows that the recovery can be addressed using symmetric polynomials \cite{Song:ff,tsakiris2018algebraic}.

Further connections exist with tomography with unknown projection angles, an especially relevant topic with the emergence of cryogenic electron microscopy (Cryo-EM) in which we get linear tomographic measurements with unknown angles \cite{Basu:2000bf,Zhao:2013cj}. Since the Radon transform has a restricted range, the problem can be modeled as \eqref{eq:model}

Problems of type \eqref{eq:model} can be split into underdetermined, $m < n$, and (over)determined ($m \geq n$). In the underdetermined case, we need a model for $x$. When $x$ is sparse, Emiya et al. \cite{Emiya:gi} adapt the branch-and-bound technique to efficiently search through all permutations.

We let $x$ be any complex vector and thus study the overdetermined case. In this setting, Unnikrishnan, Haghighatshoar, and Vetterli \cite{Unnikrishnan:2015gv,Unnikrishnan:2018gp} proved that if $A$ is iid Gaussian, it is possible to recover \emph{every} $x$ uniquely with probability 1 over realizations of $A$ if and only if $m \geq 2n$. Their proof involves sophisticated arguments from coding theory. Haghighatshoar and Caire also discuss recovery from an unknown but ordered subset of measurements \cite{Haghighatshoar:2017ks}. Pananjady, Wainwright, and Courtade discuss statistical and computational aspects of unlabeled linear regression \cite{Pananjady:2016vz}.

In this letter, we prove the following significant generalization of the above results: Imagine that $y$ was obtained as $T A x$ for some unknown invertible transformation $T \in \mathcal{T}$, where $\mathcal{T}$ is some set of invertible diagonalizable transformations, and $A$ is a known matrix. The set $\mathcal{T}$ can be finite or countably infinite. It can model unknown transfer functions, propagation parameters, and sensing parameters beyond permutations. We show that when $m \geq 2n$, for \emph{almost all} matrices $A$ \emph{all} $x$ can be recovered uniquely or up to a scale. Taking $\mathcal{T}$ to be the set of $m \times m$ permutation matrices (of cardinality $m!$), we recover the uniqueness result of Unnikrishnan, Haghighatshoar and Vetterli.

Our proof is simple and based on geometric arguments. The gist of it is that random $n$-dimensional subspaces of $\C^{2n}$ only intersect at the zero vector. On the other hand, if the ambient dimension is smaller than $2n$, then any two $n$-dimensional subspaces intersect non-trivially; we illustrate this in Figure \ref{fig:intersections}. The subtleties of the argument depend on the eigenvalues of transformations $T$; that it can be applied to permutation then follows by studying the eigenvalues of permutation matrices.

In this letter we are only concerned with the question of unique recovery. The important questions of recovery algorithms and their stability are left to future work.

Finally, after the first version of this preprint was published, Tsakiris \cite{tsakiris2018eigenspace} posted a preprint that also addresses the phenomenon we  described. Though \cite{tsakiris2018eigenspace} only deals with finite transformation classes, it extends our results to more general non-invertible transformations via algebraic-geometric arguments.

\vspace{-3mm}

\section{Main Result}

Our main lemma concerns the case of only two transformation matrices $\mathcal{T} = \set{I, T}$, where $I$ is the $m \times m$ identity matrix. We show that $x$ can be recovered from $y$ when $y$ is either $Ax$ or $TAx$, but we do not know which. The proof relies on studying the size of the intersection of the range of $A$ and the range of $TA$. 

We assume that $T \neq I$ and that $T \in \C^{m \times m}$ has an eigenvalue decomposition $T = \Phi \Lambda \Phi^{-1}$. We will denote by $\bar{\lambda}(T)$ an eigenvalue of $T$ with the largest multiplicity, and denote its multiplicity by $p(T)$. If there are multiple such eigenvalues, we break the tie arbitrarily as long as $1$ comes first. Note that the eigenvalues can be complex. Without loss of generality, we order the eigenvalues so that $\lambda_1 = \lambda_2 = \cdots = \lambda_p = \bar{\lambda}$. Since we assume that $T$ is diagonalizable, algebraic and geometric multiplicities coincide. 

We denote the Lebesgue measure on $\C^{m \times n}$ by $\mu$ and say that a property holds for almost all $A$ when it holds $\mu$-almost everywhere in $\C^{m \times n}$, that is, when it does not hold on $\mathcal{B} \subseteq \C^{m \times n}$ with $\mu(\mathcal{B}) = 0$. Since all the subsequent ``almost all'' claims in $\C^{m \times n}$ also hold almost everywhere in $\R^{m \times n}$ with respect to the Lebesgue measure on $\R^{m \times n}$, we extend the meaning of almost all to include both cases. We can then state the following:
\begin{lem}
    \label{lem:main}
    Let $T \in \C^{m \times m}$ be an invertible, diagonalizable matrix with an eigenvalue decomposition $T = \Phi \Lambda \Phi^{-1}$, $\Lambda = \diag(\lambda_1, \ldots, \lambda_m)$ and $A \in \C^{m \times n}$, $m \geq 2n$. Then for almost all matrices $A$, for all $y$ such that $y = Ax = TAz$, we have
    \begin{itemize}
        \item If $p(T) \leq m - n$, then $x = z$;
        \item If $p(T) > m - n$ and $\bar{\lambda}(T) = 1$, then $x = z$;
        \item If $p(T) > m - n$ and $\bar{\lambda}(T) \neq 1$, then $x = \bar{\lambda}(T) z$.
    \end{itemize}
\end{lem}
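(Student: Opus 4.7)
The plan is to pass to the eigenbasis of $T$ and reduce everything to a rank calculation for the stacked matrix $[A \mid -\Lambda A] \in \C^{m \times 2n}$. Since $T = \Phi \Lambda \Phi^{-1}$ with $\Phi$ invertible, the change of variables $B = \Phi^{-1} A$ sends $Ax = TAz$ to $Bx = \Lambda Bz$ and preserves Lebesgue null sets, so I may assume $T = \Lambda$ is diagonal. The constraint now reads $a_i^\T(x - \lambda_i z) = 0$ for every row $i$, and grouping rows by eigenvalue turns this into $A_j (x - \lambda^{(j)} z) = 0$ for each distinct eigenvalue $\lambda^{(j)}$ of $T$, where $A_j \in \C^{m_j \times n}$ is the corresponding row-submatrix of $A$ and $m_1 = p$ is the multiplicity of $\bar{\lambda}$.

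In the regime $p > m - n$ I first note that $\bar{\lambda}$ is the only eigenvalue whose multiplicity exceeds $m - n$: otherwise two such multiplicities would sum to more than $2(m - n) \geq m$, contradicting $\sum_j m_j = m$. Since $m \geq 2n$ forces $p \geq n + 1$, for almost every $A$ the block $A_1$ has full column rank, so $A_1(x - \bar{\lambda} z) = 0$ forces $x = \bar{\lambda} z$. Substituting back into the constraints for $j \geq 2$ gives $(\bar{\lambda} - \lambda^{(j)}) A_j z = 0$, and since $\lambda^{(j)} \neq \bar{\lambda}$ this yields $A_j z = 0$; equivalently, $z$ lies in the kernel of the $(m - p) \times n$ stack of the remaining blocks, a generically $(n + p - m)$-dimensional subspace. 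This produces the second bullet when $\bar{\lambda} = 1$ and the third otherwise.

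In the regime $p \leq m - n$ my target is the stronger statement that $[A \mid -\Lambda A]$ has full column rank $2n$ for almost every $A$, which collapses $(x, z)$ to the origin and a fortiori gives $x = z$. Because the vanishing of the $2n \times 2n$ minors is a Zariski-closed condition on $A$, it suffices to exhibit a single $A$ for which the rank is attained, and I would build one combinatorially: find a matching $\{(i_l, j_l)\}_{l=1}^n$ in the complete multipartite graph on $\{1, \ldots, m\}$ whose parts are the eigenvalue classes, and set the $l$-th column of $A$ to $e_{i_l} + e_{j_l}$. The maximum matching in such a graph has size $\min(\lfloor m/2 \rfloor, m - p)$, and both numbers are at least $n$ precisely under the hypotheses $m \geq 2n$ and $p \leq m - n$. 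With this choice, the $2n$ columns of $[A \mid -\Lambda A]$ split into $n$ pairs of vectors sitting in disjoint coordinate planes, and each pair is independent because $\det\!\left(\begin{smallmatrix} 1 & \lambda_{i_l} \\ 1 & \lambda_{j_l} \end{smallmatrix}\right) = \lambda_{j_l} - \lambda_{i_l} \neq 0$ by construction of the matching.

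The main obstacle is the matching construction: recognizing that the two thresholds $m \geq 2n$ (a global balance on $\lfloor m/2 \rfloor$) and $p \leq m - n$ (the absence of a dominating eigenvalue) are jointly the precise conditions that allow $n$ disjoint pairs from different eigenvalue classes is what makes $m \geq 2n$ the correct sampling rate and explains why the alternative conclusion $x = \bar{\lambda} z$ is forced in the opposite regime.
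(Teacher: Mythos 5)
Your proposal is correct, and it takes a genuinely different route from the paper's. The paper also passes to the eigenbasis via $\wt{A} = \Phi^{-1}A$, but then characterizes the intersection $\range(A) \cap \range(TA)$ through an annihilator $Q$ of $\range(\wt{A})$ computed from the reduced row echelon form, arrives at the square system $(\wt{S}\Lambda_1 - \wt{\Lambda}_2\wt{S})d_1 = 0$, and shows the relevant determinant is a not-identically-zero polynomial by evaluating at the single witness $\wt{S} = I_n$; the case $p > m-n$ is then handled by tracking the support of the coefficient vector $d$ to conclude that the intersection lies in the top eigenspace. You instead work directly with the stacked matrix $[B \mid -\Lambda B]$ and prove generic rank $2n$ by exhibiting a combinatorial witness: a size-$n$ matching in the complete multipartite graph of eigenvalue classes, whose existence under $\min(\lfloor m/2\rfloor, m-p) \geq n$ is exactly equivalent to the two hypotheses $m \geq 2n$ and $p \leq m-n$. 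This is arguably cleaner --- it dispenses with the rref/annihilator machinery and the paper's sub-split into $p \leq n$ versus $n < p \leq m-n$, and it makes the two thresholds combinatorially transparent. In the regime $p > m-n$ your argument is also more direct: uniqueness of the dominant eigenvalue (two multiplicities exceeding $m-n$ would sum past $m$ since $2(m-n) \geq m$), generic full column rank of the $p \times n$ block $A_1$ forcing $x = \bar{\lambda}z$, and back-substitution killing the remaining blocks. What the paper's route buys in exchange is an explicit geometric description of \emph{where} the two ranges intersect (precisely the $\bar{\lambda}$-eigenspace), which it reuses in the closing paragraph of its proof; you recover the same conclusions by substitution, so nothing is lost. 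One small point worth a sentence in a final write-up: the change of variables $A \mapsto \Phi^{-1}A$ is a linear bijection of $\C^{m\times n}$ and so preserves complex Lebesgue null sets, but for the real-measure version of ``almost all'' claimed in the paper you should note, as the paper does, that a complex-coefficient polynomial in real variables that is not identically zero still vanishes only on a real null set.
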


\begin{figure}
\includegraphics[width=\linewidth]{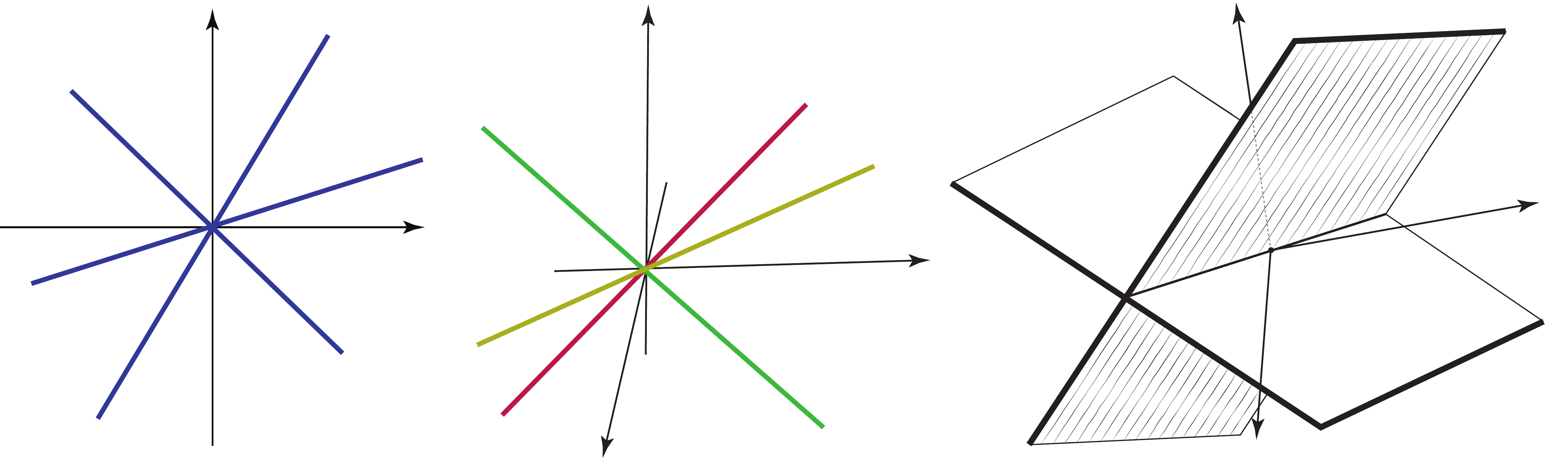}
\caption{Illustration of subspace intersections. Two 1D subspaces (lines through the origin) in 2D generically intersect only at the origin ($1 + 1 \leq 2$); the same holds for two 1D subspaces in 3D ($1 + 1 \leq 3$). Two 2D subspaces (planes through the origin) in 3D generically intersect along a line---a 1D subspace ($2 + 2 > 3$).}
\label{fig:intersections}
\vspace{-4mm}
\end{figure}

\begin{proof}
    We want to identify conditions on $A$ such that if 
    \begin{align}
    y &= A x \label{eq:nophi-model} \\
    \text{and} \quad y &= TAz = \Phi \Lambda \Phi^{-1} A z, \label{eq:phi-model}
    \end{align}
    we must have $x = z$. If \eqref{eq:nophi-model} and \eqref{eq:phi-model} hold simultaneously, then
    \[
        TAz \in \range(A).
    \]
    Thus, there exists a vector $y \in \range(A)$ such that also $Ty \in \range(A)$. Our proof hinges on the fact that this situation is very special.

    Write $A z = \Phi d$ for some $d$ (this is possible because the columns of $\Phi$ form a basis for $\C^m$). Then from \eqref{eq:phi-model} we have $y = \Phi \Lambda d$. Since $y \in \range(A)$, it must be that 
    \[
        \Lambda d = \Phi^{-1} y \in \range(\wt{A}),             
    \]
    where we defined the shortcut $\wt{A} = \Phi^{-1} A$. From the definition of $d$ we have $d = \Phi^{-1} A z$, so $d$ should also be in the range of $\wt{A}$. Another way to write this is as:
    \begin{align}
        \label{eq:nullspace-condition}
        \begin{rcases}
            d & \in \ \range(\wt{A}) \\
            \Lambda d & \in \ \range(\wt{A})
        \end{rcases}
        \ \ \Longleftrightarrow \ \ 
        \begin{cases}
            Q d & = \ 0 \\ 
            Q \Lambda d & = \  0
        \end{cases},
    \end{align}
    where $Q^*$ is the Hermitian transpose of $Q$, the columns of $Q^* \in \C^{m \times (m-n)}$ form a basis for the orthocomplement of the range of $\wt{A}$ and we used the fact that $\range(\wt{A}) = \ker(\wt{A}^*)^\perp$.

    Note that \eqref{eq:nullspace-condition} is a homogeneous system of $2 (m - n)$ equations in $m$ unknowns, so as soon as $2(m - n) < m$, that is to say, $m < 2n$, there are inevitably infinitely many solutions regardless of $\Lambda$. This case is further developed in Proposition \ref{prop:converse}. 

    Let $\rref$ denote the reduced row echelon form. For $A$ with full column rank (that is, for almost all $A$), $\wt{A}$ also has full column rank which implies
    \[
        \rref(\wt{A}^*) = \left[ \  I_{n \times n} \ | \ [S^*]_{n \times (m - n)} \ \right],
    \]
    with $S \in \C^{(m-n) \times n}$ (for convenience we indicate the block sizes in subscripts). From here we can read out a basis for $\ker(\wt{A}^*)$ as 
    \[
        Q^* =  
        \begin{bmatrix} 
            [S^*]_{n \times (m - n)} \\ - I_{(m - n) \times (m - n)}
        \end{bmatrix},
    \]
    with $S$ being full column  rank. Setting $\Lambda_1 = \diag(\lambda_1, \ldots, \lambda_n) \in \C^{n \times n}$, $\Lambda_2 = \diag(\lambda_{n+1}, \ldots, \lambda_{m})$, and partitioning $d$ as $d = [d_1^\T, \ d_2^\T]^\T$ we rewrite \eqref{eq:nullspace-condition} as
    \begin{align}
        S d_1 - d_2 &= 0, \\
        S \Lambda_1 d_1 - \Lambda_2 d_2 &= 0.
    \end{align}

    From the first equation we have $d_2 = S d_1$, so that
    \(
        S \Lambda_1 d_1 - \Lambda_2 S d_1 = 0,
    \)
    or
    \begin{equation}
        \label{eq:main-condition}
        (S \Lambda_1 - \Lambda_2 S) d_1 = 0.
    \end{equation}
    Let us focus on the top $n$ rows of this equation, with notation illustrated in Figure \ref{fig:dims}. In particular, let $\wt{S} \in \C^{n \times n}$ denote the top $n$ rows of $S$, and $\wt{\Lambda}_2 \in \C^{n \times n}$ the upper-left $n \times n$ block of $\Lambda_2$. From \eqref{eq:main-condition} we have that
    \begin{equation}
        \label{eq:main-condition-square}
        (\wt{S} \Lambda_1 - \wt{\Lambda}_2 \wt{S}) d_1 = 0.
    \end{equation}
    In order for a nonzero solution $d_1$ to exist, the system matrix must be singular:
    \begin{equation}
        \label{eq:determinant-equation}
        \det(\wt{S} \Lambda_1 - \wt{\Lambda}_2 \wt{S}) = 0.
    \end{equation}
    The determinant in \eqref{eq:determinant-equation} is a homogeneous multivariate polynomial in the entries of $\wt{S}$. This polynomial is either identically zero, or it is zero on a subset of $\C^{n \times n}$ of Lebesgue measure zero. 
    (Similarly, the set of real zeros has measure zero in $\R^{n \times n}$.) Consequently, it is either identically zero or nonzero for almost all $A$.

    \begin{figure}
    \includegraphics[width=\linewidth]{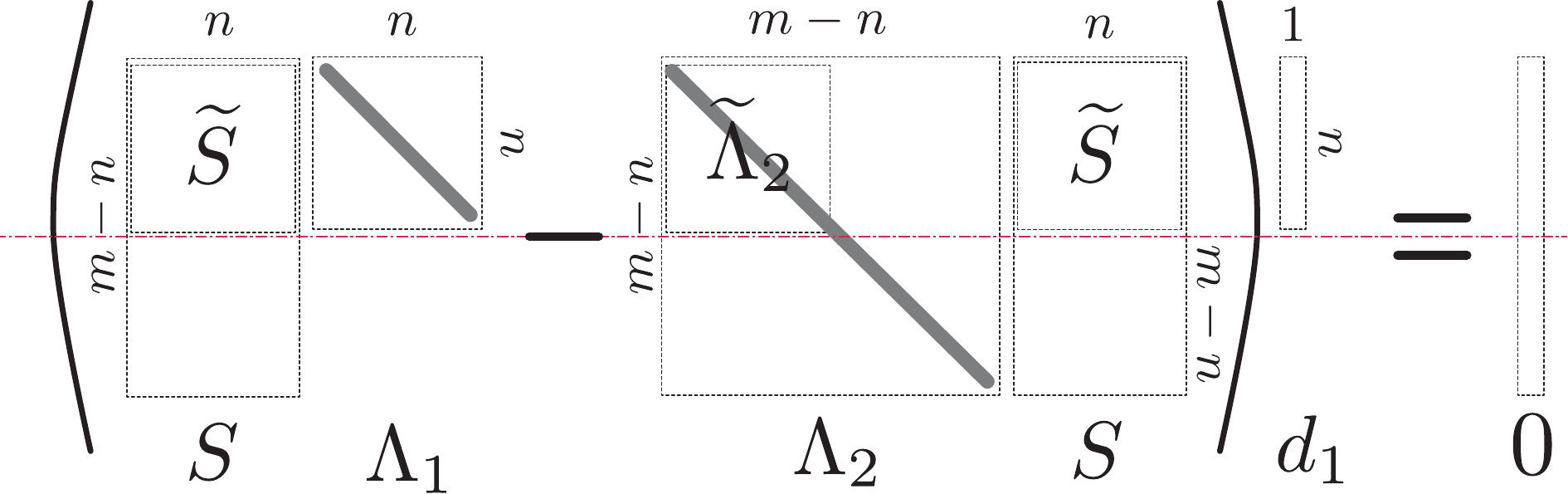}
    \caption{Dimensions of matrices in \eqref{eq:main-condition} and \eqref{eq:main-condition-square}.}
    \label{fig:dims}
    \vspace{-4mm}
    \end{figure}

    Setting $\wt{S} = I_n$ we get
    \(
        \det(\Lambda_1 - \wt{\Lambda}_2) = 0.
    \)
    Recall that $\Lambda_1 = \diag(\lambda_1, \ldots, \lambda_n)$, $\Lambda_2 = \diag(\lambda_{n+1}, \ldots, \lambda_{m})$, and that the eigenvalue with the largest multiplicity is listed first. 

    \begin{itemize}
        \item If $p \leq n$, clearly the determinant in \eqref{eq:determinant-equation} cannot be zero. Since the determinant is not identically zero, it is zero only for a set of $A$ of measure zero, hence for almost all $A$
        the only solution to \eqref{eq:main-condition-square} is $d_1 = 0$ which implies $y = 0$. Since almost all $A$ have full column rank, $x = z = 0$.
        \item If $n < p \leq m - n$ we can write \eqref{eq:main-condition} as 
        \begin{equation}
            \label{eq:p-larger-than-n}
            (S \Lambda_1 - \Lambda_2 S) d_1
            = 
            (\bar{\lambda} I - \Lambda_2) S d_1 = 0.
        \end{equation}
        The top $p - n$ rows of $(\bar{\lambda} I - \Lambda_2)$ are zero which implies the same for $(\bar{\lambda} I - \Lambda_2) S$, leaving us with $(m - n) - (p - n) = m - p \geq n$ independent nonzero equations ``at the bottom''. Using an analogous argument as above for the bottom $n$ equations, we again get that $d_1 = 0$ and $x = z$ for almost all $A$.
        \item If $m - n < p$, then $(\bar{\lambda} I - \Lambda_2) S$ has a nontrivial nullspace (we have fewer than $n$ nonzero equations for $d_1$), and \eqref{eq:p-larger-than-n} has a nonzero solution. 
    \end{itemize}    

    In the last case, any solution $d_1$ must be in the nullspace of the bottom $m - p$ rows of $S$ (this can be seen from \eqref{eq:p-larger-than-n} which also holds for $p > m - n$), so it must be that $d_2 = S d_1$ is supported only on the top $p - n$ entries (since $(m - p) + (p - n) = m - n)$. As a consequence, the vector $d = [d_1^\T, \ d_2^\T]^\T$ is supported on the top $n + (p - n) = p$ entries.

    Since $y = \Phi \Lambda d$, this implies that $y$ lives in the eigenspace spanned by the first $p$ eigenvectors corresponding to $\bar{\lambda}$. In summary, if $p > m - n$, then \emph{for all} $A$, $\range(A)$ and $\range(TA)$ intersect on the largest eigenspace corresponding to $\bar{\lambda}(T)$; \emph{for almost all} $A$ they do not intersect anywhere else.
    
    Thus for almost all $A$, all $s \in \range(A) \cap \range(TA)$ are such that $Ts = \bar{\lambda} s$. If $\bar{\lambda} = 1$, we can recover the corresponding $x$ uniquely since the equations $s = Ax$ and $s = TAz$ both have at most one solution, and the solution to the latter is $z = A^\dag T^{-1} s$, but $T^{-1} s = s$ and $A^\dag s = x$. Otherwise, if $\bar{\lambda} \neq 1$, we can recover up to a scaling since from $s = T A z$ we have $z = A^\dag T^{-1} s = \bar{\lambda}^{-1} A^\dag s = \bar{\lambda}^{-1} x$. 
\end{proof}




We now show how Lemma \ref{lem:main} implies a similar result for any number of unknown transformations in Theorem \ref{thm:main-result}. 

\begin{thm}
    \label{thm:main-result}
    Let $\mathcal{T} = \set{T_k \in \C^{m \times m}}_{k \in \mathcal{K}}$ be a finite or countably infinite set of invertible diagonalizable transforms, and $A \in \C^{m \times n}$, where $m \geq 2n$.
    Let further
    \(
        y = T A x,
    \)
    where $A \in \C^{m \times n}$, $x \in \C^n$, $T \in \mathcal{T}$, and neither $x$ nor $T$ are known. 
    Then 
    \begin{enumerate}
        \item If for all $T_1, T_2 \in \mathcal{T}$ we have that $p(T_1^{-1} T_2) \leq m - n$ or $\bar{\lambda}(T_1^{-1} T_2) = 1$ then  $x$ is uniquely determined by $y$.
        \item If there exist $T_1, T_2 \in \mathcal{T}$ such that $\bar{\lambda}(T_1^{-1} T_2) \neq 1$ and $p(T_1^{-1} T_2) > m - n$, then $x$ is determined up to a scale $\alpha \in \mathcal{A}$, where $\mathcal{A}$ is at most a countable set.
    \end{enumerate}
\end{thm}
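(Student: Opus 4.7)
My plan is to reduce the multi-transformation setting to the two-transformation setting of Lemma \ref{lem:main}. Suppose $y = T_1 A x_1 = T_2 A x_2$ for two candidate pairs $(T_1, x_1), (T_2, x_2) \in \mathcal{T} \times \C^n$. Left-multiplying by $T_1^{-1}$ gives
\begin{equation*}
    A x_1 = (T_1^{-1} T_2)\, A x_2,
\end{equation*}
which is precisely the hypothesis of Lemma \ref{lem:main} with single transformation $T := T_1^{-1} T_2$. So for almost all $A$, the lemma yields a trichotomy: either $x_1 = x_2$, or $x_1 = \bar{\lambda}(T_1^{-1} T_2)\, x_2$ when the third bullet applies.

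The next step is to promote ``almost all $A$'' from a single pair to all pairs simultaneously. The exceptional Lebesgue-null set provided by Lemma \ref{lem:main} depends on $(T_1, T_2)$, but since $\mathcal{T}$ is at most countable, so is $\mathcal{T} \times \mathcal{T}$, and a countable union of null sets is null. Thus outside a single null set of $A$, the trichotomy above holds uniformly across all pairs in $\mathcal{T} \times \mathcal{T}$.

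From here the two parts follow. For Part 1, the hypothesis forces every pair into one of the first two bullets of Lemma \ref{lem:main}, so any two compatible candidates must satisfy $x_1 = x_2$, giving uniqueness. For Part 2, all remaining ambiguities are multiplicative, with scale lying in
\begin{equation*}
    \mathcal{A} \ := \ \{1\} \cup \bigl\{\bar{\lambda}(T_1^{-1} T_2) : T_1, T_2 \in \mathcal{T}\bigr\},
\end{equation*}
which is at most countable as a countable union indexed by $\mathcal{T} \times \mathcal{T}$.

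The main obstacle I anticipate is a subtle one: although each $T_k \in \mathcal{T}$ is assumed diagonalizable, the product $T_1^{-1} T_2$ need not be, yet Lemma \ref{lem:main} was proved under a diagonalizability hypothesis. The very use of the symbols $p(T_1^{-1} T_2)$ and $\bar{\lambda}(T_1^{-1} T_2)$ in the statement suggests interpreting the theorem as tacitly restricting to families $\mathcal{T}$ whose pairwise ``quotients'' remain diagonalizable. This is automatic for the motivating case of permutation matrices, since the permutation group is closed under products and inverses and every permutation is diagonalizable over $\C$; for general $\mathcal{T}$ one would either add this as a standing assumption or verify it case by case.
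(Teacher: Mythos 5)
Your proof is correct and follows essentially the same route as the paper: reduce any two candidate explanations $y = T_1 A x_1 = T_2 A x_2$ to the two-transformation setting via $T_1^{-1}$, invoke Lemma \ref{lem:main}, and kill the pair-dependent exceptional sets with countable subadditivity of Lebesgue measure. Your remark that $T_1^{-1}T_2$ need not be diagonalizable is a fair catch --- the paper glosses over this too, and its statement likewise only makes sense under the tacit assumption you identify (which holds automatically for permutations).
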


\begin{proof}
    Denote by $\mathcal{T}_y$ the set of $T \in \mathcal{T}$ for which $y \in \range(T A)$. We only need to consider $T \in \mathcal{T}_y$. If $|\mathcal{T}_y| = 1$ we are done. 

    Suppose $|\mathcal{T}_y| > 1$ and let $T_1, T_2 \in \mathcal{T}_y$. That is, $y = T_1 A x = T_2 A z$ for some $x$ and $z$. Putting $\wt{y} = T_1^{-1} y$ we can write
    \begin{align}
        \label{eq:convert-to-identity}
        \begin{rcases}
            y & = \ T_1 Ax \\
            y & = \ T_2 Az
        \end{rcases}
        \ \ \Longleftrightarrow \ \ 
        \begin{cases}
            \wt{y} & = \ A x \\ 
            \wt{y} & = \ T_1^{-1} T_2 A z
        \end{cases}.
    \end{align}

    \begin{enumerate}
        \item By Lemma \ref{lem:main}, we have that for almost all matrices $A$, this implies $x = z$, for all $y$. In other words, the set $\mathcal{B}_{T_1, T_2}$ of ``bad'' matrices $A$ where it does not hold is of Lebesgue measure zero, $\mu(\mathcal{B}_{T_1, T_2}) = 0$. The set of matrices for which it might fail for any choice of $T_1$ and $T_2$ is
        \[
            \mathcal{B} = \bigcup_{\substack{T_1, T_2 \in \mathcal{T}_y\\T_1 \neq T_2}} \mathcal{B}_{T_1, T_2},
        \]
        but by the subadditivity of measure (and noting that the set of all pairs in $\mathcal{T}$ is countable),
        \[
            \mu(\mathcal{B}) \leq \sum_{\substack{T_1, T_2 \in \mathcal{T}_y\\T_1 \neq T_2}} \mu(\mathcal{B}_{T_1, T_2}) = 0.
        \]
        \item Again using Lemma \ref{lem:main} and reasoning as in 1), for a fixed $T_1, T_2$ and almost all $A$, we can uniquely recover any $x$ up to a scaling by $\bar{\lambda}(T_1^{-1} T_2)$. Thus the claim of the theorem holds with $\mathcal{A} = \set{\bar{\lambda}(T_1^{-1} T_2) \ : \ T_1, T_2 \in \mathcal{T}_y, T_1 \neq T_2}$.
    \end{enumerate}

\end{proof}

Theorem \ref{thm:main-result} establishes that under suitable conditions on $A$, for a rather general class of possible transformations $\mathcal{T}$, $x$ can be recovered from $y = TAx$, where $T \in \mathcal{T}$ is unknown. We now specialize these results to classical unlabeled sensing. We begin with a fact about  eigenvalues of permutations.

\begin{lem}
    \label{lem:permutation-ev}
    For any permutation matrix $\Pit$, $\bar{\lambda}(\Pit) = 1$.
\end{lem}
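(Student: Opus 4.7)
The plan is to exploit the cycle structure of a permutation matrix to read off its spectrum explicitly. Every permutation matrix $\Pit$ is similar, via a relabeling of the basis, to a block-diagonal matrix whose blocks are cyclic permutation matrices corresponding to the disjoint cycles of the underlying permutation. So if $\Pit$ has $k$ disjoint cycles of lengths $c_1, \ldots, c_k$ (with $c_1 + \cdots + c_k = m$), then the spectrum of $\Pit$ is the multiset union of the spectra of these cyclic blocks.

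Next, I would invoke the standard fact that the $c \times c$ cyclic shift matrix has simple spectrum equal to the $c$-th roots of unity $\set{\e^{2\pi \I j / c} : j = 0, \ldots, c-1}$. Combining this with the previous step, the multiplicity of an eigenvalue $\lambda$ of $\Pit$ is
\[
    \#\set{i \in \set{1, \ldots, k} \ : \ \lambda^{c_i} = 1}.
\]
In particular, $\lambda = 1$ satisfies $1^{c_i} = 1$ for every $i$, so its multiplicity is exactly $k$. For any other eigenvalue $\lambda \neq 1$, the condition $\lambda^{c_i} = 1$ can hold for at most $k$ indices, so its multiplicity is at most $k$.

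This shows $1$ has multiplicity at least as large as any other eigenvalue, so $1$ is a valid choice for $\bar{\lambda}(\Pit)$; and the tie-breaking rule stated before Lemma \ref{lem:main} (``$1$ comes first'') pins the choice down to $\bar{\lambda}(\Pit) = 1$. There is no real obstacle here beyond invoking the cycle decomposition and the spectrum of a cyclic shift; the whole argument is essentially a one-line consequence of the block structure.
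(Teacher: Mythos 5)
Your proof is correct and follows essentially the same route as the paper: decompose the permutation into disjoint cycles, observe that each cycle of length $c_i$ contributes the $c_i$-th roots of unity (each with multiplicity one), and conclude that $1$ appears in every cycle's spectrum and hence attains the maximal multiplicity. The only cosmetic difference is that you derive the spectrum from the block-diagonal similarity while the paper cites it as a known fact, and you explicitly invoke the tie-breaking convention, which is a nice touch.
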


\begin{proof}
    Every $m \times m$ permutation $\Pit$ can be written as a product of $r$ disjoint cycles $\Pit = C_1 C_2 \cdots C_r$. Since the cycles are disjoint, the sum of the lengths is exactly $m$.

    Denote by $\mathcal{W}_i$ the set of $\ell_i$-th roots of unity, where $\ell_i$ is the length of the $i$th cycle $C_i$, $\mathcal{W}_i = \set{e^{\I 2\Pit p / \ell_i} \ : \ p \in \set{0, 1, \ldots, \ell_i - 1}}$. Then the set of all eigenvalues of $\Pit$ is \cite{Najnudel:2013bc}
    \[
        \mathcal{W} = \mathcal{W}_1 \cup \mathcal{W}_2 \cup \cdots \cup \mathcal{W}_r,
    \]
    and the geometric multiplicity of each eigenvalue is the number of times it appears in sets $\mathcal{W}_i$, $i \in \set{1, 2, \ldots, r}$. Note that every $\mathcal{W}_i$ contains a $1$, since $1 = e^{\I 2 \Pit 0}$. Therefore the eigenvalue of $\Pit$ with the largest multiplicity is $1$ (there could be other eigenvalues with the same maximal multiplicity).
\end{proof}

We will also use a partial converse to Lemma \ref{lem:main} to show that $m \geq 2n$ is necessary for permutations.

\begin{prop}[Partial converse]
    \label{prop:converse}
    Let $n \leq m < 2n$, and assume that there exist $T_1, T_2 \in \mathcal{T}$ such that $T_1^{-1} T_2$ has no eigenspace of dimension larger than $m - n$. Then for almost all matrices $A$, there exist $x, z \in \C^n$ such that $x \neq z$ and $T_1 A x = T_2 A z$. 
\end{prop}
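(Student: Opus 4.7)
The plan is to mirror the reduction in the proof of Theorem~\ref{thm:main-result} and then use a dimension count. Setting $T := T_1^{-1} T_2$, the identity $T_1 A x = T_2 A z$ is equivalent to $A x = T A z$, so it suffices to exhibit $x \ne z$ in $\C^n$ satisfying $A x = T A z$, and the hypothesis becomes $p(T) \le m - n$. Write $E_1(T) := \ker(T - I)$. If $x = z$, then $A x = T A x$, forcing $y := A x \in E_1(T)$. So it suffices to produce a nonzero $y \in \range(A) \cap \range(T A)$ with $y \notin E_1(T)$, since then the unique preimages $x = A^\dag y$ and $z = A^\dag T^{-1} y$ must differ.

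I would then establish, for almost all $A$, three claims: (i) both $A$ and $T A$ have full column rank $n$; (ii) $\range(A) + \range(T A) = \C^m$, equivalently $\rank[A \mid T A] = m$; and (iii) $\range(A) \cap E_1(T) = \{0\}$. Claim (iii) follows from $\dim E_1(T) \le p(T) \le m - n$: letting $P$ denote orthogonal projection onto $E_1(T)^\perp$, condition (iii) is equivalent to $P A$ being injective on $\C^n$, and since $P$ has rank at least $n$, a witness matrix $A$ whose columns lie in $E_1(T)^\perp$ shows that the $n \times n$ minors of $P A$ are non-identically-zero polynomials in the entries of $A$. Given (i) and (ii), the Grassmann formula yields $\dim(\range(A) \cap \range(T A)) = 2 n - m \ge 1$, and any nonzero $y$ in this intersection, which by (iii) avoids $E_1(T)$, completes the construction.

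The main obstacle is (ii), which requires exhibiting a single witness $A$ with $\rank[A \mid T A] = m$ so that the locus where the rank drops is a proper subvariety of $\C^{m \times n}$. The hypothesis $p(T) \le m - n$ is essential here: if $T$ were a scalar multiple of the identity, then $\range(T A) = \range(A)$ and no such witness could exist. To construct one I would diagonalize $T = \Phi \Lambda \Phi^{-1}$, replace $A$ by $\tilde A := \Phi^{-1} A$, and choose the rows of $\tilde A$ to be standard basis vectors of $\C^n$ assigned in matched pairs of distinct eigenvalues of $\Lambda$. The existence of such a pairing reduces to finding a matching of size $m - n$ in the complete multipartite graph on $\{1, \ldots, m\}$ whose parts are the eigenvalue classes of $\Lambda$, which is guaranteed by the bound $p(T) \le m - n \le n$ that follows from $m \le 2 n$.
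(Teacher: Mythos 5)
Your proof is correct and follows essentially the same route as the paper: a dimension count forces $\range(A) \cap \range(TA)$ to be nontrivial for almost all $A$, and a genericity argument forces $\range(A)$ to miss the eigenvalue-one eigenspace $E_1(T)$, whose dimension is at most $m-n$ by hypothesis. One remark: your claim (ii), which you single out as the main obstacle, is not needed at all, since the Grassmann formula already gives $\dim(\range(A)\cap\range(TA)) = 2n - \dim(\range(A)+\range(TA)) \geq 2n - m \geq 1$ unconditionally once $A$ and $TA$ have full column rank, so the witness construction with matched eigenvalue pairs (and the diagonalizability of $T_1^{-1}T_2$ it relies on) can be deleted.
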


\begin{proof}
    For almost all $A$ the ranges of $T_1 A$ and $T_2 A$ have dimension $n$. Since the sum of the dimensions of the two range spaces exceeds the dimension of the ambient space, $n + n > m$, they must have a non-trivial intersection:
    \begin{equation}
        \label{eq:intersection-exists}
        \dim \range(A) \cap \range(TA) \geq 2n - m \geq 1.
    \end{equation}
    Let $s \in S := \range(T_1 A) \cap \range(T_2 A)$. Then there exist $x$ and $z$ such that $s = T_1 Ax$ and $s = T_2 Az$.

    The unknown $x$ can be recovered only if $x = z$, in which case $A x$ is an eigenvector of $T_1^{-1} T_2$ with an eigenvalue 1 (recall equation \eqref{eq:convert-to-identity}). This can happen only if $\range(A)$ and the corresponding eigenspace of $T_1^{-1} T_2$ have a nontrivial intersection. But if every eigenspace $E_\lambda$ of $T_1^{-1} T_2$ has dimension at most $m - n$, then $\dim \range(A) + \dim(E_\lambda) \leq m$ and the two intersect only for a set of matrices $A$ of measure zero. Since from \eqref{eq:intersection-exists} a nonzero $s$ does exist, it must be that $x \neq z$.
\end{proof}

We can now easily prove the following:

\begin{cor}[Jayakrishnan, 2015]
    \label{cor:permutations}
    If $\mathcal{P} \in \R^{m \times m}$ is the set of all $m!$ permutation matrices of $m$ elements, then any $x$ can be uniquely recovered from measurements
    \(
        y = \Pit A x,
    \)
    where both $\Pit \in \mathcal{P}$ and $x \in \C^n$ are unknown, for almost matrices $A \in \C^{m \times n}$ with $m \geq 2n$. Conversely, if $m < 2n$ then for almost all $A$ there exist $x \neq z$ and permutations $\Pit_1 \neq \Pit_2$ such that $\Pit_1 A x = \Pit_2 A z$.
\end{cor}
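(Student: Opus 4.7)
The plan is to derive both halves of the corollary from the machinery already in place, with one small carve-out for the edge case $m=n$.

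For the forward direction ($m \geq 2n$), I would apply Theorem \ref{thm:main-result} with $\mathcal{T}=\mathcal{P}$. The key observation is that $\mathcal{P}$ is closed under products and inverses, so for any $\Pit_1,\Pit_2\in\mathcal{P}$ the matrix $\Pit_1^{-1}\Pit_2$ is itself a permutation matrix. Permutation matrices are orthogonal, hence invertible and, being normal, diagonalizable over $\C$, so the standing hypotheses of Theorem \ref{thm:main-result} are met. Lemma \ref{lem:permutation-ev} guarantees $\bar{\lambda}(\Pit_1^{-1}\Pit_2)=1$ for every such pair, which automatically places us in case (1) of the theorem regardless of the multiplicity $p(\Pit_1^{-1}\Pit_2)$. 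Uniqueness for almost all $A$ then follows immediately.

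For the converse when $n < m < 2n$, I would invoke Proposition \ref{prop:converse} with an explicit witness: take $\Pit_1=I$ and $\Pit_2$ equal to a cyclic shift $C$ of length $m$. Then $\Pit_1^{-1}\Pit_2=C$ has the $m$-th roots of unity as its spectrum with every eigenvalue simple, so every eigenspace is one-dimensional. Since $m-n\geq 1$, the hypothesis ``no eigenspace of dimension larger than $m-n$'' holds, and the proposition directly yields $x\neq z$ and $\Pit_1,\Pit_2\in\mathcal{P}$ with $\Pit_1 A x=\Pit_2 A z$.

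The remaining corner is $m=n$, where Proposition \ref{prop:converse} does not apply because $m-n=0$ (every eigenspace has dimension at least one). I would handle this directly: almost every such $A$ is invertible, so $\Pit_1 Ax=\Pit_2 Az$ is equivalent to $Ax=\Pit Az$ with $\Pit=\Pit_1^{-1}\Pit_2$. Picking any nonidentity $\Pit\in\mathcal{P}$ (which exists as long as $m\geq 2$), whose fixed space has dimension strictly less than $m$, I can choose $z$ with $\Pit A z\neq A z$ and then set $x=A^{-1}\Pit A z$, producing the required $x\neq z$. The only real ``obstacle'' is noticing and dispatching this boundary case; the rest is essentially bookkeeping on top of the general theorem, the spectral fact about permutations, and the dimension-counting converse.
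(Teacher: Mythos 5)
Your proof follows the paper's own argument essentially verbatim: the forward direction combines Theorem~\ref{thm:main-result} with Lemma~\ref{lem:permutation-ev} via closure of $\mathcal{P}$ under products and inverses, and the converse applies Proposition~\ref{prop:converse} to the cyclic shift, whose eigenspaces are all one-dimensional. Your extra treatment of the $m=n$ boundary is a genuine (if minor) improvement over the paper, whose converse asserts $p(T)=1\leq m-n$ and therefore silently fails when $m=n$; your direct argument---pick a nonidentity $\Pit$, choose $z$ with $Az$ outside its fixed space, and set $x=A^{-1}\Pit Az$---correctly closes that gap (for $m\geq 2$, as you note).
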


\begin{proof}
Recoverability when $m \geq 2n$ is a straightforward consequence of Theorem \ref{thm:main-result} and Lemma \ref{lem:permutation-ev}, by noting that for any two permutations $\Pit_1$ and $\Pit_2$, $\Pit_1^{-1} \Pit_2$ is also a permutation. To prove the converse, note that cyclic shift by 1 (which is a permutation) has $m$ distinct eigenvalues $\set{\e^{\I 2k\Pit/m} \ : \ k \in \{0, 1, \ldots, m-1\}}$, so that all its eigenspaces have dimension 1. Whenever $m < 2n$, this implies $p(T) = 1 \leq m - n$, and the claim follows from Proposition \ref{prop:converse}. Namely, denoting the cyclic shift by $\Pit_c$, for all $\Pit_1, \Pit_2$ such that $\Pit_1^\T \Pit_2 = \Pit_c$ and almost all $A$, there exist $x, z$ such that $\Pit_1 A x = \Pit_2 A z$ and $x \neq z$.
\end{proof}

\vspace{-6mm}
\section{Extension to row-selection matrices} 
\label{sec:extension_to_row_selection_matrices}

In \cite{Unnikrishnan:2018gp} the authors state a more general result that allows row-selection matrices. They prove that instead of measuring a permutation of $Ax$, one can measure any (arbitrarily permuted) subset of $k$ entries of $Ax$ and still get unique recovery as long as $k \geq 2n$.

Though this case is beyond the scope of this letter, we outline an intuition for how our arguments might apply. The measurement can be written as $y = R \Pit A$, where $R$ is the top $k$ rows of an $m \times m$ identity matrix and $\Pit$ an unknown permutation. Same as before, it is sufficient to show that for almost all $A$ and two fixed permutations $\Pit_1$ and $\Pit_2$ 
\begin{equation}
    \label{eq:selection-equal}
    R \Pit_1 A x = R \Pit_2 A z
\end{equation}
implies $x = z$. Once that is established an argument parallel to that of Theorem \ref{thm:main-result} proves the result. 

Suppose that \eqref{eq:selection-equal} holds. Both $R \Pit_1 A$ and $R \Pit_2 A$ consist of $k$ rows of $A$ in some permuted order. Some rows of $A$, denote them by $A_C$, might be present in both $R \Pit_1 A$ and $R \Pit_2 A$, while some, denote them by $V, W$, appear in only one of them. We can thus rewrite \eqref{eq:selection-equal} as
\begin{equation}
    \label{eq:mismatched-rows}
    \Pit_1'
    \begin{bmatrix}
    V \\ A_{\calC}
    \end{bmatrix}
    x
    =
    \Pit_2'
    \begin{bmatrix}
    W \\ A_{C}
    \end{bmatrix}
    z
    \Longrightarrow
    \begin{bmatrix}
    V \\ A_{C}
    \end{bmatrix}
    x
    =
    \Pit
    \begin{bmatrix}
    W \\ A_{\calC}
    \end{bmatrix}
    z,
\end{equation}
for some permutations $\Pi_1'$ and $\Pi_2'$, where $\Pit = \Pit_1'^\T \Pit_2'$. We allow any of the blocks to be empty.

At one extreme where $A_C$ is empty, we can absorb $\Pit$ in $W$ and ask when it can be that $V x = W z$? But $\range(V) \cap \range(W) = \set{0}$ for almost all $V, W$ and hence almost all $A$ (see Figure~\ref{fig:intersections}). The detailed discussion of Lemma \ref{lem:main} is not needed because now $W$ varies independently of $V$ (the number of the degrees of freedom doubles).

At the other extreme where $V$ and $W$ are empty, $A_C$ has at least $2n$ rows. Lemma \ref{lem:main} and Corollary \ref{cor:permutations} guarantee that for almost all $A_C$ (and hence almost all $A$), the range of $A_C$ does not intersect the range of $\Pit A_C$ unless $\Pit$ has a large eigenspace, in which case this eigenspace corresponds to $\lambda = 1$. Interpolating between empty $V$ and $W$ and empty $A_C$, we are adding degrees of freedom and making the two matrices less dependent, which makes range intersections less likely.



\vspace{-2mm}
\section{Conclusion and future work} 
\label{sec:conclusion_and_future_work}

We presented a generalization of the classical unlabeled sampling canon. Instead of recovering $x$ from an unknown permutation of $y = Ax$, we showed that it can be recovered from rather general linear transformations of $y$ as long as the set of  transformations is at most countably infinite and $A$ has sufficiently many rows. As a byproduct, we get a simple, geometric proof of the uniqueness result for classical permutation-based unlabeled sensing.

The set of transformations $\mathcal{T}$ could model unknown room transfer functions where $A$ takes bandlimited spatial samples of speech. It could model different cameras and projections, or the variety of available sensors in any modality. In the classical unlabeled setting, we can expect the permutation ambiguity to be compounded by other uncertainties which can be modeled by $\mathcal{T}$---unknown filters, offsets, spatially-varying gains, etc.

An interesting line of future work is to relax assumptions on $T$. The fact that $T$ is diagonalizable or invertible does not seem essential, as long as its nullspace is not too large compared to the range space of $A$. It also seems plausible that nonlinear $T$ should work. The main practical question is that of stability and  polynomial-time recovery algorithms. For the case of permutations, results are beginning to emerge; these will point the way to algorithms for more general transformations.

\bibliographystyle{IEEEtran}
\bibliography{unknown}

\begin{thebibliography}{10}
\providecommand{\url}[1]{#1}
\csname url@samestyle\endcsname
\providecommand{\newblock}{\relax}
\providecommand{\bibinfo}[2]{#2}
\providecommand{\BIBentrySTDinterwordspacing}{\spaceskip=0pt\relax}
\providecommand{\BIBentryALTinterwordstretchfactor}{4}
\providecommand{\BIBentryALTinterwordspacing}{\spaceskip=\fontdimen2\font plus
\BIBentryALTinterwordstretchfactor\fontdimen3\font minus
  \fontdimen4\font\relax}
\providecommand{\BIBforeignlanguage}[2]{{%
\expandafter\ifx\csname l@#1\endcsname\relax
\typeout{** WARNING: IEEEtran.bst: No hyphenation pattern has been}%
\typeout{** loaded for the language `#1'. Using the pattern for}%
\typeout{** the default language instead.}%
\else
\language=\csname l@#1\endcsname
\fi
#2}}
\providecommand{\BIBdecl}{\relax}
\BIBdecl

\bibitem{Emiya:gi}
V.~Emiya, A.~Bonnefoy, L.~Daudet, and R.~Gribonval, ``{Compressed sensing with
  unknown sensor permutation},'' in \emph{IEEE ICASSP}.\hskip 1em plus 0.5em
  minus 0.4em\relax IEEE, 2014, pp. 1040--1044.

\bibitem{Balakrishnan:1962}
A.~V. Balakrishnan, ``{On the Problem of Time Jitter in Sampling},'' \emph{{IRE
  Transactions on Information Theory}}, vol.~8, no.~3, pp. 226--236, 1962.

\bibitem{Marziliano:2000ki}
P.~Marziliano and M.~Vetterli, ``{Reconstruction of irregularly sampled
  discrete-time bandlimited signals with unknown sampling locations},''
  \emph{{IEEE Trans. Signal Process.}}, vol.~48, no.~12, pp. 3462--3471, 2000.

\bibitem{Krekovic:2016je}
M.~Krekovi{\'c}, I.~Dokmani{\'c}, and M.~Vetterli, ``{EchoSLAM: Simultaneous
  localization and mapping with acoustic echoes},'' in \emph{2016 IEEE
  International Conference on Acoustics, Speech and Signal Processing
  (ICASSP)}.\hskip 1em plus 0.5em minus 0.4em\relax IEEE, 2016, pp. 11--15.

\bibitem{Dokmanic:2013dz}
I.~Dokmani{\'c}, R.~Parhizkar, A.~Walther, Y.~M. Lu, and M.~Vetterli,
  ``{Acoustic Echoes Reveal Room Shape},'' \emph{Proc. Natl. Acad. Sci.}, vol.
  110, no.~30, pp. 12\,186--12\,191, Jun. 2013.

\bibitem{Dokmanic:2014tc}
I.~Dokmani{\'c}, L.~Daudet, and M.~Vetterli, ``{How to Localize Ten Microphones
  in One Fingersnap},'' in \emph{EUSIPCO}, 2014.

\bibitem{Dokmanic:2016gu}
------, ``{From acoustic room reconstruction to slam},'' in \emph{2016 IEEE
  International Conference on Acoustics, Speech and Signal Processing
  (ICASSP)}.\hskip 1em plus 0.5em minus 0.4em\relax IEEE, 2016, pp. 6345--6349.

\bibitem{Huang2018aa}
S.~Huang and I.~Dokmani\'c, ``{Reconstructing Point Sets from Distance
  Distributions},'' \emph{arXiv}, vol. abs/1804.02465, 2018.

\bibitem{Durisi:en}
G.~Durisi, T.~Koch, and P.~Popovski, ``{Toward massive, ultrareliable, and
  low-latency wireless communication with short packets},'' \emph{Proc. IEEE},
  2016.

\bibitem{Song:ff}
X.~Song, H.~Choi, and Y.~Shi, ``{Permuted Linear Model for Header-Free
  Communication via Symmetric Polynomials},'' in \emph{2018 IEEE International
  Symposium on Information Theory (ISIT)}.\hskip 1em plus 0.5em minus
  0.4em\relax IEEE, pp. 661--665.

\bibitem{Ranieri:2012wl}
J.~Ranieri, I.~Dokmani{\'c}, A.~Chebira, and M.~Vetterli, ``{Sampling and
  Reconstruction of Time-Varying Atmospheric Emissions},'' in \emph{ICASSP},
  Oct. 2012, pp. 1--4.

\bibitem{MartinezCamara2013hz}
M.~Martinez-Camara, I.~Dokmani{\'c}, J.~Ranieri, R.~Scheibler, M.~Vetterli, and
  A.~Stohl, ``{The Fukushima inverse problem},'' in \emph{ICASSP}.\hskip 1em
  plus 0.5em minus 0.4em\relax IEEE, 2013, pp. 4330--4334.

\bibitem{Ajdler:2006ex}
T.~Ajdler, L.~Sbaiz, and M.~Vetterli, ``{The Plenacoustic Function and Its
  Sampling},'' \emph{IEEE Trans. Signal Process.}, vol.~54, no.~10, pp.
  3790--3804, 2006.

\bibitem{tsakiris2018algebraic}
M.~C. Tsakiris, L.~Peng, A.~Conca, L.~Kneip, Y.~Shi, and H.~Choi, ``An
  algebraic-geometric approach to shuffled linear regression,'' \emph{arXiv
  preprint arXiv:1810.05440}, 2018.

\bibitem{Basu:2000bf}
S.~Basu and Y.~Bresler, ``{Uniqueness of tomography with unknown view
  angles},'' vol.~9, no.~6, pp. 1094--1106, Jun. 2000.

\bibitem{Zhao:2013cj}
Z.~Zhao and A.~Singer, ``{Rotationally Invariant Image Representation for
  Viewing Direction Classification in Cryo-EM},'' \emph{arXiv}, Sep. 2013.

\bibitem{Unnikrishnan:2015gv}
J.~Unnikrishnan, S.~Haghighatshoar, and M.~Vetterli, ``{Unlabeled sensing:
  Solving a linear system with unordered measurements},'' in \emph{2015 53rd
  Annual Allerton Conference on Communication, Control and Computing
  (Allerton)}.\hskip 1em plus 0.5em minus 0.4em\relax IEEE, 2015, pp. 786--793.

\bibitem{Unnikrishnan:2018gp}
------, ``{Unlabeled Sensing With Random Linear Measurements},'' \emph{IEEE
  Trans. Inf. Theory}, vol.~64, no.~5, pp. 3237--3253, Apr. 2018.

\bibitem{Haghighatshoar:2017ks}
S.~Haghighatshoar and G.~Caire, ``{Signal Recovery from Unlabeled Samples},''
  \emph{arXiv}, no.~5, pp. 1242--1257, Jan. 2017.

\bibitem{Pananjady:2016vz}
A.~Pananjady, M.~J. Wainwright, and T.~A. Courtade, ``{Linear Regression with
  an Unknown Permutation: Statistical and Computational Limits},''
  \emph{arXiv}, Aug. 2016.

\bibitem{tsakiris2018eigenspace}
M.~C. Tsakiris, ``Eigenspace conditions for homomorphic sensing,'' \emph{arXiv
  preprint arXiv:1812.07966}, 2018.

\bibitem{Najnudel:2013bc}
J.~Najnudel and A.~Nikeghbali, ``{The Distribution of Eigenvalues of Randomized
  Permutation Matrices },'' \emph{Ann. inst. Fourier}, vol.~63, no.~3, pp.
  773--838, 2013.

\end{thebibliography}

\end{document}